\titleformat{\section}{\centering\normalfont\scshape}{\Roman{section}.}{5pt}{}
\titleformat{\subsection}{\normalfont\it}{\Alph{subsection}.}{5pt}{}
\titleformat{\subsubsection}{\normalfont\it}{\hspace{4mm}\arabic{subsubsection})}{5pt}{}
\newcommand\infoFootnote[1]{%
  \begingroup
  \renewcommand\thefootnote{}\footnote{#1}%
  \addtocounter{footnote}{-1}%
  \endgroup}
\newtheorem{thm}{Theorem}
\newtheorem{lem}[thm]{Lemma}
\newtheorem{assum}{Assumption}
\newtheorem{defn}{Definition}
\newtheorem{rem}{Remark}
\DeclareMathOperator*{\argmin}{\arg\min}
\newcommand{\R}{\mathbb{R}} 
\newcommand{\ab}{\boldsymbol{a}}
\newcommand{\fb}{\boldsymbol{f}}
\newcommand{\gb}{\boldsymbol{g}}
\newcommand{\ub}{\boldsymbol{u}}
\newcommand{\xb}{\boldsymbol{x}}
\newcommand{\yb}{\boldsymbol{y}}
\newcommand{\xib}{\boldsymbol{\xi}}
\newcommand{\zerob}{\boldsymbol{0}}
\newcommand{\Ab}{\boldsymbol{A}}
\newcommand{\Bb}{\boldsymbol{B}}
\newcommand{\Cb}{\boldsymbol{C}}
\newcommand{\Db}{\boldsymbol{D}}
\newcommand{\Ib}{\boldsymbol{I}}
\newcommand{\Kb}{\boldsymbol{K}}
\newcommand{\Qb}{\boldsymbol{Q}}
\newcommand{\Rb}{\boldsymbol{R}}
\newcommand{\Ub}{\boldsymbol{U}}
\newcommand{\Wb}{\boldsymbol{W}}
\newcommand{\Yb}{\boldsymbol{Y}}
\newcommand{\Pib}{\boldsymbol{\Pi}}
\newcommand{\ybs}{\mathbf{y}}
\newcommand{\ubs}{\mathbf{u}}
\newcommand{\Dbc}{\boldsymbol{\mathcal{D}}}
\newcommand{\Obc}{\boldsymbol{\mathcal{O}}}
\newcommand{\Qbc}{\boldsymbol{\mathcal{Q}}}
\newcommand{\Rbc}{\boldsymbol{\mathcal{R}}}
\newcommand{\Tbc}{\boldsymbol{\mathcal{T}}}
\newcommand{\Uc}{\mathcal{U}}
\newcommand{\Xc}{\mathcal{X}}
\newcommand{\Yc}{\mathcal{Y}}
\newcommand{\Zc}{\mathcal{Z}}
\newcommand{\image}{\mathrm{image}}
\newcommand{\rank}{\mathrm{rank}}
\newcommand{\dom}{\mathrm{dom}}
\renewcommand{\boldsymbol}[1]{#1}
\renewcommand{\mathbf}[1]{\mathrm{#1}}
\def\tvdots{\vbox{\baselineskip=2pt \lineskiplimit=0pt \kern6pt \hbox{.}\hbox{.}\hbox{.}}}
\title{\vspace{-2mm}\bf Implicit predictors in regularized data-driven predictive control}
\author{Manuel Kl\"adtke and Moritz Schulze Darup \vspace{2mm}}
\date{}
  \renewcommand{\headrulewidth}{0pt}%
\begin{document}
\maketitle

\pagestyle{fancy}
\fancyhf{} 
\renewcommand{\headrulewidth}{0pt}
\fancyfoot[C]{\footnotesize 2475-1456 \copyright 2023 IEEE}

\begin{abstract}
We introduce the notion of implicit predictors, which characterize the input-(state)-output prediction behavior underlying a predictive control scheme, even if it is not explicitly enforced as an equality constraint (as in traditional model or subspace predictive control). To demonstrate this concept, we derive and analyze implicit predictors for some basic data-driven predictive control (DPC) schemes, which offers a new perspective on this popular approach that may form the basis for modified DPC schemes and further theoretical insights.
\end{abstract}
\infoFootnote{M. Kl\"adtke and M. Schulze Darup are with the \href{https://rcs.mb.tu-dortmund.de/}{Control and~Cyberphysical Systems Group}, Faculty of Mechanical Engineering, TU Dortmund University, Germany. E-mails:  \href{mailto:manuel.klaedtke@tu-dortmund.de}{\{manuel.klaedtke, moritz.schulzedarup\}@tu-dortmund.de}. \vspace{0.5mm}}
\infoFootnote{This paper is a \textbf{preprint} of a contribution to the IEEE Control Systems Letters. The DOI of the original paper is \href{https://doi.org/10.1109/LCSYS.2023.3285104}{10.1109/LCSYS.2023.3285104}} 

\section{Introduction}
\label{sec:Introduction}
Data-driven predictive control (DPC) is an increasingly popular control approach that utilizes linear combinations of collected trajectory data to make predictions instead of relying on a system model (see, e.g., \cite{Coulson2019DeePC, Berberich2020, Dorfler2021}). While exact predictions and equivalence to model predictive control (MPC) are typically only established for linear time-invariant (LTI) systems and exact data (with some nonlinear extensions available; see, e.g.,  \cite{Berberich2020, Alsalti2021}), modified versions of DPC show promising results even in the absence of these requirements (e.g., \cite{Coulson2019DeePC, Dorfler2021, Coulson2019RegularizedDeePC, Berberich2020stability}). Most proposed DPC schemes use regularizations, which have demonstrated benefits and many different interpretations (see, e.g., \cite{Coulson2019RegularizedDeePC, Dorfler2021}), including relationships to other approaches such as subspace predictive control (SPC) \cite{FAVOREEL1999}. However, many of these interpretations (e.g. \cite[Thms.~3 and 4]{Breschi2022new}) are based on limit behavior, i.e., regularization weights approaching zero or infinity. In this study, we aim to characterize the behavior of DPC for finite weights, as this is how it is typically applied in practice. To achieve a general characterization, we do not focus on a specific system class or noise properties but, instead, just make (reasonable) assumptions about the data. To make DPC predictions more accessible and relatable to traditional schemes such as MPC, where consistency with input-state-output predictors is explicitly enforced via equality constraints, we introduce the notion of implicit predictors (specified in Def.~\ref{def:implicit_predictor} below). Implicit predictors can be interpreted as the predictive behavior that is implicitly attributed to the data-generating system by the predictive scheme, and we view it as a central and elucidating object with much to learn from. 
To show the benefit of this concept, we derive and analyze implicit predictors for a basic regularized DPC scheme, focusing on two proposed choices of (squared) 2-norm regularization (see \cite{Dorfler2021}), and examine the effect of input and output constraints on these predictors.
The paper is organized as follows. First, in Section~\ref{sec:fundamentals}, we introduce the notion of implicit predictors, specify the objective of this paper, and summarize important preliminaries.
In Section~\ref{sec:implicitPredictorsDPC}, we derive implicit predictors for two types of regularized DPC and analyze the impact of constraints on these predictors. Finally, we conclude our work in Section~\ref{sec:Conclusions} and preview future opportunities we envision for implicit predictors.

\section{Problem statement and preliminaries}\label{sec:fundamentals}

One could argue that (multi-step) predictors, which establish a mapping from an initial state $\xb_0$ and an input (sequence) $\ubs_f$ to a predicted output (sequence) $\ybs_f$, are at the heart of many predictive control schemes. Traditionally, such a predictor $\hat\ybs_f(\xb_0, \ubs_f)$ may be explicitly included in an optimal control problem (OCP) as an equality constraint $\ybs_f=\hat\ybs_f(\xb_0, \ubs_f)$. However, even if not included explicitly (such as in DPC), one may still observe the predictions following a similar pattern, which we formalize as follows.
\begin{defn}\label{def:implicit_predictor}
    We call $\hat\ybs(\xb_0,\ubs_f)$ an \textit{implicit predictor} for an OCP if including the constraint $\ybs_f = \hat\ybs(\xb_0, \ubs_f)$ does not alter the (set of) minimizers $(\ubs_f^\ast, \ybs_f^\ast)$ and the optimal value.
\end{defn}

For DPC, we interpret this definition as the predictive behavior implicitly attributed to the data-generating system by the DPC scheme. Given this interpretation, we take a somewhat opposing viewpoint to the bias-variance hypothesis in \cite[Sec. V.C]{Dorfler2021}. In fact, we claim that the choice of regularizer (and other parameters of the OCP) fully specifies a model class given by the structure of its resulting implicit predictor. Therefore, this concept should be seen as a tool to analyze existing schemes and explain their behavior.
Although this new viewpoint is mainly theoretical in nature, practitioners can use its results to evaluate whether the predictive behavior of a given scheme matches their prior knowledge of the true system properties, and thus select an appropriate scheme similar to traditional model selection. In this spirit, we analyze an existing DPC scheme without making assumptions on the data-generating system class or properties of measurement noise, allowing for a general characterization. 
To prepare this analysis, we recall preliminaries on DPC and its relation to MPC and SPC, while highlighting the role of predictors in these schemes.

\subsection{DPC and its relation to MPC}
Instead of utilizing a discrete-time state-space model 
\begin{subequations}\label{eq:statespace_general}
\begin{align}
    \xb (k+1) &= \fb(\xb(k),\ub(k),k)\\
    \yb(k) &= \gb(\xb(k),\ub(k),k) 
\end{align}
\end{subequations}
with input $\ub\in\R^m$, state $\xb\in\R^n$, and output $\yb\in\R^p$ as in traditional MPC, predictions in DPC are realized based on previously collected trajectory data $(\ubs^{(1)}, \ybs^{(1)}), \hdots, (\ubs^{(\ell)}, \ybs^{(\ell)})$ via linear combinations
$$
    \begin{pmatrix}
        \ubs_\text{pred}\\
        \ybs_\text{pred}
    \end{pmatrix}
    =
    \begin{pmatrix}
        \ubs^{(1)}\\
        \ybs^{(1)}
    \end{pmatrix} a_1 + \hdots + 
    \begin{pmatrix}
        \ubs^{(\ell)}\\
        \ybs^{(\ell)}
    \end{pmatrix} a_\ell
    = 
    \Dbc \ab.
$$
Here, the dimensions of the data matrix $\Dbc\in\R^{L(m+p)\times\ell}$ and generator vector $\ab\in\R^\ell$ are specified by the length~$L$ of recorded (as well as predicted) trajectories and the number~$\ell$ of data trajectories used for predictions. The rationale for this procedure is given by a result for linear time-invariant (LTI) systems, which can be described, e.g., by the specification
\begin{subequations}\label{eq:statespace_LTI}
\begin{align}
    \xb (k+1) &= \Ab\xb(k)+\Bb\ub(k)\\
    \yb(k) &= \Cb\xb(k)+\Db\ub(k), 
\end{align}
\end{subequations}
of \eqref{eq:statespace_general}.
For these LTI systems and assuming $L$ is greater than the lag of the system, $\image(\Dbc)$ is equivalent to the set of all possible system trajectories if and only if \cite{Markovsky2020}
\begin{equation}\label{eq:GPE}
    \rank(\Dbc)=L m + n. 
\end{equation}
 Note that the condition \eqref{eq:GPE} not only signifies a minimum rank for data-driven predictions, but also the maximum rank that the data matrix $\Dbc$ can achieve for exact data. 
For the case that the individual trajectories $(\ubs^{(i)}, \ybs^{(i)})$ are time-shifted sections of a single long trajectory, a popular sufficient condition for \eqref{eq:GPE} is given by Willems' fundamental lemma \cite{WILLEMS2005}. To include the current initial condition of the system as a starting point for predicted trajectories, the predicted I/O-sequence is typically partitioned into a past section $(\ubs_p, \ybs_p)$ and a future section $(\ubs_f, \ybs_f)$ with $N_p$ respectively $N_f$ time-steps yielding
$$
    \begin{pmatrix}
        \ubs_\text{pred}\\
        \ybs_\text{pred}
    \end{pmatrix}
    = 
    \begin{pmatrix}
        \ubs_p\\
        \ubs_f\\
        \ybs_p \\
        \ybs_f
    \end{pmatrix}
    =
    \begin{pmatrix}
        \Ub_p\\
        \Ub_f\\
        \Yb_p \\
        \Yb_f
    \end{pmatrix}\ab
    =
    \Dbc \ab.
$$
The past section of a predicted trajectory is then forced to match the I/O-data $\xib$ recorded in the most recent $N_p$ time-steps during closed-loop operation, i.e., the equality constraints
$$
    \xib= 
    \begin{pmatrix}
        \ubs_p \\ \ybs_p
    \end{pmatrix}
    =
    \begin{pmatrix}
        \Ub_p \\ \Yb_p
    \end{pmatrix}\ab
    = \Wb_p \ab
$$
force any predicted trajectory in \eqref{eq:DPC} to start with the most recently witnessed behavior of the system. In this context, the past trajectory $\xib$ can also be interpreted as the state of a (usually non-minimal) state-space realization of the system and properly specifies its initial condition if $N_p$ is chosen larger or equal to its lag \cite{Markovsky2008}. 
\begin{rem}\label{rem:Statespace}
    Although we have introduced the data-driven predictions in an I/O setting, they can be straightforwardly modified to a state-space setting \cite{DePersis2020}, which we will use for visualization of a low dimensional example in Section~\ref{sec:implicitPredictorsDPC}.
\end{rem}

Now, consider a classical OCP
\begin{align}
\label{eq:MPC}
\min_{\ub(k),\xb(k),\yb(k)} 
 &\sum_{k=0}^{N_f-1} \|\yb(k)\|_{\Qb}^2 +  \|\ub(k)\|_{\Rb}^2 \span \span \\
\nonumber
\text{s.t.} \quad \quad  \xb(0)&=\xb_0, \\
\nonumber
 \xb(k+1)&=\Ab\,\xb(k) + \Bb \ub(k), &&\forall k \in \{0,...,N_f-2\}, \\
 \nonumber
  \yb(k)&=\Cb\,\xb(k) + \Db \ub(k), &&\forall k \in \{0,...,N_f-1\}, \\
 \nonumber
\left(\ub(k),\yb(k)\right) & \in \Uc_k \times \Yc_k, &&\forall k \in \{0,...,N_f-1\}
\end{align}
 for MPC with prediction horizon $N_f$, positive definite weighing matrices $ \Rb\in\R^{m  \times m},\Qb\in\R^{p  \times p }$, convex constraint sets $\Uc_k \subseteq \R^{m}, \Yc_k \subseteq \R^{p}$, as well as equality constraints specifying the initial state condition $\xb_0$ and predicted behavior based on \eqref{eq:statespace_LTI}. It has been shown in \cite{Coulson2019DeePC} that DPC given by
\begin{subequations}
\label{eq:DPC}
\begin{align}
\min_{\ubs_f,\ybs_f,\ab} 
 \|\ybs_f\|_{\Qbc}^2 &+  \|\ubs_f\|_{\Rbc}^2 \label{eq:DPCcost} \\
\text{s.t.} \quad \quad  \begin{pmatrix}
    \xib \\ \ubs_f \\ \ybs_f 
\end{pmatrix} &= \begin{pmatrix}
    \Wb_p \\ \Ub_f \\ \Yb_f 
\end{pmatrix}\ab, \label{eq:DPCeqConstr}\\
\left(\ubs_f, \ybs_f \right) &\in \Uc \times \Yc  \label{eq:DPCsetConstr}
\end{align}
\end{subequations}
with $\Qbc:=\mathrm{blkdiag}(\Qb, ..., \Qb)$, $\Rbc:=\mathrm{blkdiag}(\Rb, ..., \Rb)$, and 
\begin{align*}
    \Yc &:= \left.\left\{\ybs_f
    \in \R^{p N_f}
    \right| \yb(k) \in \Yc_k, \,\forall k \in \{0, ..., N_f-1\}\right\} \\
    \Uc &:= \left.\left\{\ubs_f
    \in \R^{m N_f}
    \right|\ub(k) \in \Uc_k,\,\forall k \in \{0, ..., N_f-1\}\right\}
\end{align*}%
based on exact data generated by an LTI system is equivalent to the MPC in \eqref{eq:MPC}.
Remarkably, and although the original theory behind exact predictions via linear combination of data does not apply to arbitrary nonlinear systems or with noise and disturbances in the data, DPC has shown good closed-loop performance when applied to these cases. A common feature of these successful applications, is the addition of a regularization term $h(\ab)$ to the cost function, where different choices of $h(\ab)$ can have different interpretations for its intended effect on the predictions (\cite{Dorfler2021, Coulson2019DeePC, Coulson2019RegularizedDeePC, Berberich2020stability}).

\subsection{The role of predictors and SPC}

 In the case of linear     MPC, its well-known multi-step predictor can be expressed as
\begin{equation}
    \hat\ybs_{\text{MPC}}(\xb_0,\ubs_f) = \Obc \xb_0 + \Tbc \ubs_f, \label{eq:MPC_predictor}
\end{equation}
where $\Obc$ and $\Tbc$ are often referred to as the extended observability matrix and the impulse response matrix, respectively (see, e.g. \cite{Overschee1996}). Using \eqref{eq:MPC_predictor}, we can state the OCP \eqref{eq:MPC} also as
\begin{align}
\label{eq:MPC_condensed}
\min_{\ubs_f,\ybs_f} 
 \|\ybs_f\|_{\Qbc}^2 +  \|\ubs_f\|_{\Rbc}^2 \span \span \\
\nonumber
\text{s.t.} \quad \quad \ybs_f = \Obc \xb_0 + \Tbc \ubs_f, \quad
\left(\ubs_f, \ybs_f \right) & \in \Uc \times \Yc, \nonumber
\end{align}
which may be useful to eliminate optimization variables and to analyze the structure of its solution. Essentially, instead of using the state-space model \eqref{eq:statespace_LTI} as a one-step predictor for each time-step in \eqref{eq:MPC}, consecutive applications of the one-step predictor yield the multi-step predictor \eqref{eq:MPC_predictor} used in \eqref{eq:MPC_condensed}.  Note that \eqref{eq:MPC_predictor} trivially acts as an implicit predictor for the original MPC problem \eqref{eq:MPC}. By including the constraint $\ybs_f=\hat\ybs_\text{MPC}(\xb_0,\ubs_f)$, other constraints and variables can be eliminated, resulting in \eqref{eq:MPC_condensed}. However, this removal should be seen separately to Definition \ref{def:implicit_predictor} and is not our focus. 

Instead of estimating a state-space model \eqref{eq:statespace_LTI} for MPC, an alternative approach based on subspace identification \cite{Overschee1996} is given by SPC \cite{FAVOREEL1999}. Here, a linear multi-step predictor
\begin{equation}
    \hat\ybs_{\text{SPC}}(\xib,\ubs_f) = \Kb_{\text{SPC}} \begin{pmatrix}
        \xib \\ \ubs_f
    \end{pmatrix}\label{eq:SPC_predictor}
\end{equation}
can be estimated directly from data as the solution 
\begin{equation}
    \Kb_{\text{SPC}} := \argmin_\Kb \left\|\Yb_f-\Kb \begin{pmatrix}
        \Wb_p \\ \Ub_f
    \end{pmatrix} \right\|_F^2 = \Yb_f \begin{pmatrix}
        \Wb_p \\ \Ub_f
    \end{pmatrix}^+ \label{eq:K_SPC}
\end{equation}
to a least squares problem with the Frobenius norm $\|\cdot\|_F$.  
Crucially, this form of SPC is related to DPC as follows.
\begin{lem}
    The SPC predictor \eqref{eq:SPC_predictor} is an implicit predictor for the unregularized DPC problem \eqref{eq:DPC} with exact trajectory data generated by an LTI system. 
\end{lem}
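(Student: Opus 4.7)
The plan is to verify Definition~\ref{def:implicit_predictor} directly, by showing that every feasible triple $(\ubs_f, \ybs_f, \ab)$ of the unregularized DPC problem~\eqref{eq:DPC} already satisfies $\ybs_f = \hat\ybs_{\text{SPC}}(\xib, \ubs_f)$. Once this identity is established, appending the constraint $\ybs_f = \hat\ybs_{\text{SPC}}(\xib, \ubs_f)$ to~\eqref{eq:DPC} removes no feasible points, so neither the set of minimizers $(\ubs_f^\ast, \ybs_f^\ast)$ nor the optimal value can change, which is exactly what Definition~\ref{def:implicit_predictor} requires.

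First, I would exploit the structure of the DPC equality constraint~\eqref{eq:DPCeqConstr}: any feasible point admits a generator $\ab$ with $\xib = \Wb_p \ab$, $\ubs_f = \Ub_f \ab$, and $\ybs_f = \Yb_f \ab$. Substituting the first two into the SPC predictor~\eqref{eq:SPC_predictor} yields
\[
    \hat\ybs_{\text{SPC}}(\xib, \ubs_f) = \Kb_{\text{SPC}} \begin{pmatrix} \Wb_p \\ \Ub_f \end{pmatrix} \ab,
\]
so the target identity reduces to showing
\[
    \Kb_{\text{SPC}} \begin{pmatrix} \Wb_p \\ \Ub_f \end{pmatrix} = \Yb_f,
\]
i.e., that the least-squares problem~\eqref{eq:K_SPC} defining $\Kb_{\text{SPC}}$ is attained with zero residual on the given data.

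The main technical step is therefore to exhibit some $\Kb$ with $\Yb_f = \Kb \begin{pmatrix} \Wb_p \\ \Ub_f \end{pmatrix}$, since then zero residual is achieved and the pseudoinverse solution $\Kb_{\text{SPC}}$ must satisfy the same identity. This is where the hypotheses enter: because the data are exact trajectories of an LTI system~\eqref{eq:statespace_LTI} with $L = N_p + N_f$ large enough that $\image(\Dbc)$ coincides with the set of all system trajectories (as recalled around~\eqref{eq:GPE}) and $N_p$ is at least the system lag, $\xib$ uniquely determines the state at the start of the future window~\cite{Markovsky2008}, and the multi-step predictor~\eqref{eq:MPC_predictor} expresses $\ybs_f$ as a linear function of $\xib$ and $\ubs_f$. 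Applying this relation column-by-column to the trajectories stored in $\Dbc$ produces the desired $\Kb$.

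I expect this column-wise construction — linking the state-reconstruction interpretation of $\xib$ to the rows of $\begin{pmatrix} \Wb_p \\ \Ub_f \end{pmatrix}$ via the LTI and exactness assumptions — to be the main obstacle, mainly as a matter of careful bookkeeping rather than deep technique. Once the zero-residual identity $\Kb_{\text{SPC}} \begin{pmatrix} \Wb_p \\ \Ub_f \end{pmatrix} = \Yb_f$ is in hand, the implicit-predictor conclusion is immediate: every feasible point of~\eqref{eq:DPC} already lies on the graph of $\hat\ybs_{\text{SPC}}$, so augmenting~\eqref{eq:DPC} with the SPC equality constraint changes neither the feasible set nor the objective, establishing the claim.
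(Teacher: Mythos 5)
Your argument is correct, but it takes a different route from the paper: the paper's proof is a one-line appeal to the known DPC--SPC equivalence result of Fiedler and Lucia (\cite[Thm.~1]{Fiedler2021}), rephrased in the language of Definition~\ref{def:implicit_predictor}, whereas you give a self-contained derivation. Your key step --- that exact LTI data makes the least-squares problem \eqref{eq:K_SPC} attain zero residual, so $\Kb_{\text{SPC}}\bigl(\begin{smallmatrix}\Wb_p \\ \Ub_f\end{smallmatrix}\bigr)=\Yb_f$, and hence every feasible triple of \eqref{eq:DPC} satisfies $\ybs_f=\Yb_f\ab=\Kb_{\text{SPC}}\bigl(\begin{smallmatrix}\xib \\ \ubs_f\end{smallmatrix}\bigr)=\hat\ybs_{\text{SPC}}(\xib,\ubs_f)$ --- is sound, and it actually proves something stronger than Definition~\ref{def:implicit_predictor} requires (the whole feasible set, not just the minimizers, lies on the graph of the SPC predictor), which also makes transparent why appending the constraint changes nothing; note it is correctly \emph{adding} an implied constraint rather than removing \eqref{eq:DPCeqConstr}, consistent with the paper's caveat about $\xib\in\image(\Wb_p)$. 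Two small bookkeeping points in your column-wise construction of a zero-residual $\Kb$: you implicitly rely on the standing assumption $N_p\geq$ lag (so that a linear state-reconstruction map from consistent past I/O exists), which is part of the setting recalled in Section~\ref{sec:fundamentals} and of the cited theorem, and for a non-minimal realization the past window pins down only the state of a minimal realization --- which suffices, since the future outputs depend only on that. Interestingly, your argument does not even need the rank condition \eqref{eq:GPE}, only exactness of the data; the citation-based proof buys brevity, while yours buys transparency about exactly which hypotheses are used.
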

\begin{proof}
    Follows from \cite[Thm.~1]{Fiedler2021}.
\end{proof}
While the result in \cite{Fiedler2021} is actually stated as \eqref{eq:DPC} being equivalent to SPC, we rather rephrase it in the context of implicit predictors. This is because \eqref{eq:DPCeqConstr} cannot be removed (without changing the set of feasible $\xib$) as it also implies $\xib\in\image(\Wb_p)$, which is not captured by the SPC predictor.

 Introducing implicit predictors as explicit equality constraints can be useful to change the solution strategy for some OCPs. This, however, is not our focus in this work.  Instead, we want to highlight this notion as a way to think about what kind of behavior is (implicitly) attributed to the system by DPC schemes, even though the predictor is not given as an equality constraint in any of the following cases. 

\section{Implicit predictors in regularized DPC}\label{sec:implicitPredictorsDPC}

While in the case of deterministic LTI systems, adding additional trajectory data, i.e., columns to $\Dbc$, cannot increase its rank past $\rank(\Dbc)\leq m L+n$, this is typically not the case for systems with noise or nonlinearities. Instead, adding trajectory data generated by an LTI system with some output noise and persistently exciting enough input data (see, e.g., \cite{WILLEMS2005}) will almost surely increase its rank until $\Dbc$ has full row rank for wide enough (at least square) data matrices \cite[Lem.~3]{Breschi2022new} and similar behavior can be observed for nonlinear systems with (or without) noise. Throughout this section, we will thus make the following assumption.
\begin{assum}\label{assum:fullRank}
    The data matrix $\Dbc$ has full row rank.
\end{assum}
 We want to emphasize that Assumption \ref{assum:fullRank} takes into account the presence of measurement noise, and we do not make any further assumptions about the class of system generating the data. 
Once the OCP tuning parameters (including regularization) are determined, the solution becomes deterministic with respect to the provided data $\Dbc$ (including potential output noise, which is included in $\Wb_p$, $\Yb_f$ and cannot be distinguished from the true output data), regardless of the type of system that generated it. Since the aim of this work is to characterize the predictive behavior and not to compare it to the (unknown) true system dynamic (as e.g., in \cite{Breschi2022new}), this allows for a very general analysis. 
 The assumption thus helps in shifting the focus away from what initially caused the lack of rank deficiency and towards what predictions are being made by DPC based on the given data (including noise). 
Furthermore, having more data than necessary in \eqref{eq:GPE} is typically preferred, where some authors report good results with $\Dbc$ being square \cite[Sect.~5.2.4]{Markovsky2021}. As the assumption aligns with this case, we believe it is a reasonable starting point but future work will consider data matrices with ranks between the practical minimum given by \eqref{eq:GPE} and the maximum given by Assumption~\ref{assum:fullRank}.
Crucially, due to Assumption~\ref{assum:fullRank} there is a vector $\ab$ satisfying \eqref{eq:DPCeqConstr} for \textit{any} triple $(\xib, \ubs_f,\ybs_f)$, resulting in arbitrary and meaningless predictions if the scheme is not suitably modified.
However, adding a regularization $h(\ab)$ leading to the regularized DPC scheme
\begin{equation} \label{eq:regularizedDPC}
\min_{\ubs_f,\ybs_f,\ab} 
 \|\ybs_f\|_{\Qbc}^2 +  \|\ubs_f\|_{\Rbc}^2 + h(\ab) \quad 
\text{s.t.}\quad \text{\eqref{eq:DPCeqConstr}--\eqref{eq:DPCsetConstr}}
\end{equation}
has shown good results with respect to its predictive capabilities. Note that, while \eqref{eq:DPCeqConstr} does not restrict the choice of any $(\xib, \ubs_f, \ybs_f)$ in terms of feasibility due to Assumption~\ref{assum:fullRank}, it still defines a relation to $\ab$, which can be used to express the effect of $h(\ab)$ in the following way.
\begin{lem}
    Under Assumption~\ref{assum:fullRank}, the regularized DPC problem \eqref{eq:regularizedDPC} is equivalent to   
    \begin{equation}\label{eq:regDPCouter}
    \min_{\ubs_f,\ybs_f}
     \|\ybs_f\|_{\Qbc}^2 +  \|\ubs_f\|_{\Rbc}^2 + h^\ast(\xib, \ubs_f, \ybs_f) \quad \text{s.t.}\quad  \eqref{eq:DPCsetConstr}
    \end{equation}
    with unique
    \begin{equation}\label{eq:regDPCinner}
    h^\ast(\xib, \ubs_f, \ybs_f) := \min_{\ab}\:\:   h(\ab) \quad 
    \text{s.t.}\quad \eqref{eq:DPCeqConstr}.
    \end{equation}
\end{lem}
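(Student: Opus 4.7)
The plan is to prove the equivalence by decoupling the minimization over $\ab$ from the minimization over $(\ubs_f,\ybs_f)$, and then identifying the resulting inner problem as $h^\ast$. The key structural observation is that the objective of \eqref{eq:regularizedDPC} splits into a part $\|\ybs_f\|_{\Qbc}^2 + \|\ubs_f\|_{\Rbc}^2$ depending only on $(\ubs_f,\ybs_f)$ and a part $h(\ab)$ depending only on $\ab$; similarly, the set constraint \eqref{eq:DPCsetConstr} restricts only $(\ubs_f,\ybs_f)$, whereas the linear equality \eqref{eq:DPCeqConstr} is the sole coupling between $\ab$ and $(\xib,\ubs_f,\ybs_f)$.

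Given this separability, I would invoke the standard nested-minimization identity
\[
\min_{x,y}\bigl\{f(x)+g(y)\bigr\}\ \text{s.t.}\ x\in X,\ (x,y)\in C \ =\ \min_{x\in X}\Bigl[f(x)+\min_{y:(x,y)\in C} g(y)\Bigr],
\]
applied with $x=(\ubs_f,\ybs_f)$, $y=\ab$, $X$ encoding \eqref{eq:DPCsetConstr}, and $C$ encoding \eqref{eq:DPCeqConstr}. This directly rewrites \eqref{eq:regularizedDPC} as \eqref{eq:regDPCouter} with the inner problem given by \eqref{eq:regDPCinner}, and since the decomposition is a faithful re-parameterization it preserves both the optimal value and the projected set of minimizers $(\ubs_f^\ast,\ybs_f^\ast)$.

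It then remains to verify that $h^\ast$ is a well-defined single-valued function of $(\xib,\ubs_f,\ybs_f)$. Feasibility of the inner problem for every argument follows immediately from Assumption~\ref{assum:fullRank}: since $\Dbc$ has full row rank, the linear system \eqref{eq:DPCeqConstr} admits at least one solution $\ab$ for any right-hand side $(\xib,\ubs_f,\ybs_f)$. Uniqueness of the value $h^\ast$ is then automatic, because the minimum of $h$ over a non-empty feasible set is, by definition, a single real number even if the minimizing $\ab$ is not unique.

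The main obstacle I anticipate is purely technical: ensuring that the infimum in \eqref{eq:regDPCinner} is actually attained, so that the nested-minimization identity holds in the stated equality form rather than only through an approximation argument. For the squared 2-norm regularizers considered in the sequel, $h$ is continuous and coercive while the feasible set of \eqref{eq:DPCeqConstr} is a non-empty closed affine subspace, so attainment follows from a standard Weierstrass-type argument and the reduction goes through without any further structural assumptions on $h$.
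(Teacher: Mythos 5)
Your proposal is correct and follows essentially the same route as the paper: decouple \eqref{eq:regularizedDPC} into the outer problem \eqref{eq:regDPCouter} and inner problem \eqref{eq:regDPCinner}, use Assumption~\ref{assum:fullRank} to guarantee the inner problem is feasible for every $(\xib,\ubs_f,\ybs_f)$ so that \eqref{eq:DPCeqConstr} imposes no restriction on the outer variables, and observe that $h^\ast$ is unique as an optimal value even when the minimizing $\ab$ is not. Your added remark on attainment of the inner minimum is a reasonable technical refinement that the paper leaves implicit, but it does not change the argument.
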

\begin{proof}
    The OCP \eqref{eq:regularizedDPC} can be trivially decoupled with the outer problem \eqref{eq:regDPCouter} and inner problem \eqref{eq:regDPCinner} by optimizing over one of the optimization variables while treating the others as parameters. The constraint \eqref{eq:DPCsetConstr} is irrelevant to the inner problem because $(\ubs_f, \ybs_f )$ act as parameters there. On the other hand, the constraint \eqref{eq:DPCeqConstr} is irrelevant to the outer problem because $\ab$ is eliminated by solving the inner problem and \eqref{eq:DPCeqConstr} does not restrict the choice of $(\xib, \ubs_f, \ybs_f)$ due to Assumption~\ref{assum:fullRank}. While the minimizer $\ab^\ast(\xib, \ubs_f, \ybs_f)$ to \eqref{eq:regDPCinner} may be non-unique depending on the choice of $h(\ab)$ and the number $\ell$ of data trajectories (i.e., the width of $\Dbc$), the resulting optimal value of the inner problem $h^\ast(\xib, \ubs_f, \ybs_f)=h(\ab^\ast(\xib, \ubs_f, \ybs_f))$ is trivially unique by optimality.
\end{proof}

While any trajectory $(\xib, \ubs_f, \ybs_f)$ can be predicted via linear combination of collected data trajectories, the regularizer $h(\ab)$ assigns to every such trajectory a (possibly non-unique) optimal generator vector $\ab^\ast(\xib, \ubs_f, \ybs_f)$ and associated unique cost $h^\ast(\xib, \ubs_f, \ybs_f)$. Therefore, the regularizer $h(\ab)$ is, ideally, chosen in a way that incentivizes predictions that seem likely based on the given data in $\Dbc$ by assigning them low (or even zero) cost while avoiding unlikely predictions via high costs.

We will now analyze this behavior for different choices of $h(\ab)$ and characterize the associated implicit predictors. Note that we treat the influence of equality constraints \eqref{eq:DPCsetConstr} on the implied predictor separately in Section~\ref{subsec:ConstrainedDPC}.

\subsection{(Squared) 2-norm regularization}\label{subsec:DPC}

Commonly seen in regularized DPC is (squared) 2-norm regularization, where we choose $h(\ab)=\lambda_a\|\ab\|_2^2$ with weighing parameter $\lambda_a$. Here, the inner problem \eqref{eq:regDPCinner} is a quadratic minimization with linear equality constraints and positive definite Hessian $2 \lambda_a \Ib$, which yields the unique minimizer
\begin{equation}
    \ab^\ast(\xib, \ubs_f, \ybs_f) = \begin{pmatrix}
    \Wb_p \\ \Ub_f \\ \Yb_f 
\end{pmatrix}^+ \begin{pmatrix}
    \xib \\ \ubs_f \\ \ybs_f 
\end{pmatrix} \label{eq:a_min}
\end{equation}
and associated cost
\begin{align*} 
    h^\ast(\xib, \ubs_f, \ybs_f)
    &= \lambda_a \begin{pmatrix}
    \xib \\ \ubs_f \\ \ybs_f 
\end{pmatrix}^\top \!\left(\begin{pmatrix}
    \Wb_p \\ \Ub_f \\ \Yb_f 
\end{pmatrix}\begin{pmatrix}
    \Wb_p \\ \Ub_f \\ \Yb_f 
\end{pmatrix}^\top\right)^{-1}\!\!\begin{pmatrix}
    \xib \\ \ubs_f \\ \ybs_f 
    \end{pmatrix}\!.
    \end{align*}
Utilizing a block LDU decomposition of the weighing matrix, one can further show that 
{\setlength{\jot}{-2pt} 
\begin{align}\label{eq:2-normCost}
     h^\ast(\xib, \ubs_f, \ybs_f) &= \lambda_a (\ybs_f-\hat\ybs_\text{SPC})^\top\Qbc_\text{reg}(\ybs_f-\hat\ybs_\text{SPC}) \\
    &\quad + \lambda_a
    \begin{pmatrix}
        \xib \\
        \ubs_f
    \end{pmatrix}^\top
    \left(
    \begin{pmatrix}
        \Wb_p \\
        \Ub_f
    \end{pmatrix}
    \begin{pmatrix}
        \Wb_p \\
        \Ub_f
    \end{pmatrix}^\top
    \right)^{-1}
    \begin{pmatrix}
        \xib \\
        \ubs_f
    \end{pmatrix}, \nonumber
\end{align}}
where we introduced
\begin{align}
    &\qquad\qquad\;\;\Qbc_\text{reg}:=\left(\Yb_f\left(\Ib-\Pib\right)\Yb_f^\top\right)^{-1},  \nonumber\\
    &\Pib := \begin{pmatrix}
        \Wb_p \\
        \Ub_f
    \end{pmatrix}^\top\left(
    \begin{pmatrix}
        \Wb_p \\
        \Ub_f
    \end{pmatrix}
    \begin{pmatrix}
        \Wb_p \\
        \Ub_f
    \end{pmatrix}^\top
    \right)^{-1}
    \begin{pmatrix}
        \Wb_p \\
        \Ub_f
    \end{pmatrix}, \label{eq:projMatrix}
\end{align}
and occasionally omit the arguments of $\hat\ybs_\text{SPC}(\xib,\ubs_f)$ for brevity. The expression \eqref{eq:2-normCost} highlights the connection between DPC and SPC for this regularization. 
However, this regularizer not only incentivizes predictions that align with the SPC predictor but also introduces a bias associated with the additional quadratic cost term in $(\xib,\ubs_f)$. This (typically unwelcome) bias justifies the introduction of a projection in the regularizer \cite{Dorfler2021}, which we will analyze in Section~\ref{subsec:projectionDPC}. 

Even though predictions that align with the SPC predictor are incentivized with this regularization, the SPC predictor does not act as an implicit predictor in this case. In contrast to the unregularized deterministic case, where the rank deficiency of $\Dbc$ implies a subspace for predictions, the implicit predictor is instead implied by optimality.
\begin{thm}\label{thm:implicitPredictor2Norm}
    Consider the regularized DPC problem \eqref{eq:regularizedDPC} with regularizer $h(\ab)=\lambda_a\|\ab\|_2^2$ and without additional input and output constraints. 
    Under Assumption~\ref{assum:fullRank},
\begin{equation} \label{eq:DPC_predictor2}
        \hat\ybs_\text{DPC}(\xib, \ubs_f)=
        \left(\lambda_a \Qbc_\text{reg}+\Qbc\right)^{-1} \lambda_a \Qbc_\text{reg} \hat\ybs_\text{SPC}(\xib,\ubs_f)
\end{equation}
    is an implicit predictor for this problem.
\end{thm}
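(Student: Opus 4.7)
The plan is to leverage the preceding lemma, which recasts \eqref{eq:regularizedDPC} as the outer/inner problem pair \eqref{eq:regDPCouter}--\eqref{eq:regDPCinner}, and then to exploit the closed-form expression \eqref{eq:2-normCost} for $h^\ast(\xib,\ubs_f,\ybs_f)$ already derived just above the theorem. Because the hypothesis rules out input and output constraints, the outer problem \eqref{eq:regDPCouter} collapses to an unconstrained quadratic program in $(\ubs_f,\ybs_f)$ with $\xib$ as a parameter, so the implicit predictor can be read off by partial minimization.

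Concretely, I would fix $\xib$ and $\ubs_f$ and minimize the objective over $\ybs_f$ alone. Substituting \eqref{eq:2-normCost}, the second summand on its right-hand side depends only on $(\xib,\ubs_f)$ and is therefore constant in $\ybs_f$, so the $\ybs_f$-subproblem reduces to minimizing
$$
\|\ybs_f\|_{\Qbc}^2 + \lambda_a\,(\ybs_f-\hat\ybs_\text{SPC})^\top \Qbc_\text{reg}\,(\ybs_f-\hat\ybs_\text{SPC}).
$$
This is a strongly convex quadratic with Hessian $2(\Qbc+\lambda_a \Qbc_\text{reg})$; setting its gradient with respect to $\ybs_f$ to zero yields the linear system $(\Qbc+\lambda_a \Qbc_\text{reg})\,\ybs_f = \lambda_a \Qbc_\text{reg}\,\hat\ybs_\text{SPC}(\xib,\ubs_f)$, whose unique solution is exactly the claimed $\hat\ybs_\text{DPC}(\xib,\ubs_f)$ in \eqref{eq:DPC_predictor2}.

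To close the argument I would verify Definition~\ref{def:implicit_predictor} directly: appending the equality $\ybs_f=\hat\ybs_\text{DPC}(\xib,\ubs_f)$ to the equivalent outer problem neither changes the optimal value nor the minimizer set, because for every $\ubs_f$ the unique optimal $\ybs_f$ is already this value. Hence any minimizer $(\ubs_f^\ast,\ybs_f^\ast)$ of the original problem automatically satisfies the added constraint, and any minimizer of the constrained problem achieves the same value and is still feasible for the unconstrained one.

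The main obstacle is less the algebra than the well-posedness of the expression, specifically the invertibilities of $\Yb_f(\Ib-\Pib)\Yb_f^\top$ (needed to define $\Qbc_\text{reg}$) and of $\Qbc+\lambda_a \Qbc_\text{reg}$. Both follow from Assumption~\ref{assum:fullRank}: full row rank of $\Dbc$ forces the rows of $\Yb_f$ to be linearly independent modulo the row space of $(\Wb_p^\top\;\Ub_f^\top)^\top$, which is exactly the condition making $\Yb_f(\Ib-\Pib)\Yb_f^\top$ positive definite, and positive definiteness of $\Qbc$ then handles the remaining inverse. Beyond this, the only bookkeeping step is recalling the block LDU decomposition alluded to in the text that produces \eqref{eq:2-normCost}.
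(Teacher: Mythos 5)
Your proposal is correct and follows essentially the same route as the paper's proof: partial minimization of the outer problem over $\ybs_f$ with $(\xib,\ubs_f)$ as parameters, using \eqref{eq:2-normCost} and dropping the $\ybs_f$-independent term, which yields the unconstrained quadratic whose unique minimizer is \eqref{eq:DPC_predictor2}, and then invoking Definition~\ref{def:implicit_predictor}. Your added justification that $\Yb_f(\Ib-\Pib)\Yb_f^\top$ is positive definite as a Schur complement under Assumption~\ref{assum:fullRank} is a welcome well-posedness detail the paper leaves implicit, but it does not change the argument.
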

\begin{proof}
    The implicit predictor \eqref{eq:DPC_predictor2} is the minimizer 
    $$
        \hat\ybs_\text{DPC}(\xib, \ubs_f)=\argmin_{\ybs_f} 
     \|\ybs_f\|_{\Qbc}^2 +  \|\ubs_f\|_{\Rbc}^2 + h^\ast(\xib, \ubs_f, \ybs_f)
    $$
    to an inner optimization problem, where $\ubs_f$ and $\xib$     act as parameters. Since dropping any terms independent of $\ybs_f$ does not change the minimizer, we can simplify the problem as
    \begin{align*}
        &\argmin_{\ybs_f} && 
     \lambda_a (\ybs_f-\hat\ybs_\text{SPC})^\top\Qbc_\text{reg}(\ybs_f-\hat\ybs_\text{SPC}) 
      + \ybs_f^\top \Qbc \ybs_f \\
     =& \argmin_{\ybs_f} &&  \ybs_f^\top \left(\lambda_a\Qbc_\text{reg}+\Qbc\right)\ybs_f  -2 \hat\ybs_\text{SPC}^\top \lambda_a\Qbc_\text{reg} \ybs_f,
    \end{align*}
    which yields an unconstrained quadratic minimization problem with the minimizer given by \eqref{eq:DPC_predictor2}. Now, since $\hat\ybs_\text{DPC}(\xib, \ubs_f)$ is the parametric minimizer for any $(\xib,\ubs_f)$, the minimizers $(\ubs_f^\ast, \ybs_f^\ast)$ to the regularized DPC problem must naturally satisfy the relation $\ybs_f^\ast = \hat\ybs_\text{DPC}(\xib, \ubs_f^\ast)$ for any $\xib$. Hence, including the equality constraint $\ybs_f = \hat\ybs_\text{DPC}(\xib, \ubs_f)$ with the regularized DPC problem does not change its optimal value or minimizers, making $\hat\ybs_\text{DPC}(\xib, \ubs_f)$ an implicit predictor of this OCP.
\end{proof}
\begin{figure*}
    \centering
        \includegraphics[trim=1.1cm 12.84cm 0.9cm 12.42cm,clip=true, scale=0.9]{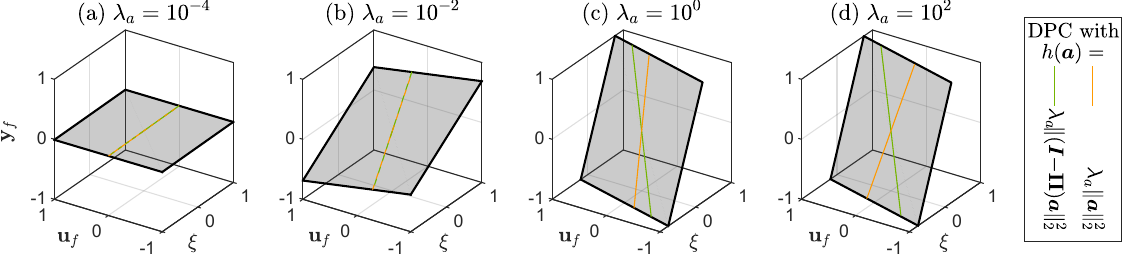}
        \caption{The implicit predictor (grey) is equal for both analyzed unconstrained DPC schemes and its structure is given by a subspace that ``tilts'' between $\hat\ybs_\text{DPC}(\xib, \ubs_f)=0$ and the SPC predictor \eqref{eq:SPC_predictor} depending on $\lambda_a$. The parametric DPC solutions $(\xib, \ubs_f^\ast(\xib), \ybs_f^\ast(\xib))$ for the different regularizations  (green/orange) evolve on this subspace, confirming its validity. }
        \label{fig:unconstrainedDPC}
\end{figure*}

While this implicit predictor does not match the SPC predictor, it still represents a subspace just like the latter and is visualized in Fig.~\ref{fig:unconstrainedDPC} for different regularization weightings~$\lambda_a$. For very high $\lambda_a$, the predictor starts aligning with the SPC predictor (and even matches it for $\lambda_a \to \infty$) as one would expect from the costs given by \eqref{eq:2-normCost}. For lower~$\lambda_a$, the implicit predictor tilts towards $\lim_{\lambda_a \to 0} \hat\ybs_\text{DPC}(\xib, \ubs_f) = \zerob$ as seen in Fig.~\ref{fig:unconstrainedDPC}.a, which is also reasonable, since it represents the optimal solution to $\argmin_{\ybs_f}\|\ybs_f\|_\Qbc^2$. These behaviors perfectly match the viewpoint that ``control and identification regularize each other'' \cite{Dorfler2021} in DPC. However, we emphasize that this means the (implicitly) predicted system behavior of regularized DPC is linear and either consistent with the SPC predictor (for $\lambda_a \to \infty$) or more optimistic than SPC in the sense that it biases these most likely predictions (in the least squares sense, see \eqref{eq:K_SPC}) towards more favorable predictions given by the control objective. While this optimism in the predictions may indeed be one reason why DPC performs better than SPC in some case studies (e.g. \cite[Fig.~2]{Dorfler2021}), it should be recognized that this also leads to potentially undesirable predictive behavior, which we will highlight when discussing additional constraints in Section~\ref{subsec:ConstrainedDPC}.

\subsection{Projection-based (squared) 2-norm regularization}\label{subsec:projectionDPC}

While we only introduced $\Pib$ in \eqref{eq:2-normCost} for notation purposes, we next exploit its role as a projection matrix. However, instead of $\Pib$ itself, we will focus on $\Ib-\Pib$, which is an orthogonal projector on $\ker\big(\begin{pmatrix}
    \Wb_p^\top & \Ub_f^\top
\end{pmatrix}^\top\big)$. Intuitively, these projection matrices can be thought of as splitting any $\ab = \Pib \ab + (\Ib-\Pib )\ab$ into a part $\Pib \ab$ that aligns with the SPC solution and a part $(\Ib-\Pib )\ab$ that does not. In \cite{Dorfler2021}, the regularizer $h(\ab) = \lambda_a \|(\Ib-\Pib)\ab\|_2^2$ was introduced because, although it penalizes deviations from the SPC predictor just as $h(\ab) = \lambda_a \|\ab\|_2^2$, it does not introduce any further bias like the latter in \eqref{eq:2-normCost}. In the following, we will phrase this important result in our framework and subsequently analyze the associated implicit predictor. We first note that the inner problem given by \eqref{eq:regDPCinner} is, again, a quadratic minimization problem with linear constraints and the same (unique) minimizer as before in \eqref{eq:a_min}.
This can easily be verified by plugging $\ab^\ast(\xib, \ubs_f, \ybs_f)$ into the KKT system of \eqref{eq:regDPCinner}. Uniqueness of $\ab^\ast(\xib, \ubs_f, \ybs_f)$ may not be immediately obvious, since the Hessian $2\lambda_a (\Ib-\Pib )^\top (\Ib-\Pib )=2\lambda_a (\Ib-\Pib )$ is only positive semi-definite due to $(\Ib-\Pib)$ being a projector on a subspace of $\R^\ell$. However, since
$$
    \ker\begin{pmatrix}
    \Wb_p \\ \Ub_f \\ \Yb_f 
\end{pmatrix} \subset \ker\begin{pmatrix}
    \Wb_p \\ \Ub_f 
\end{pmatrix}
$$
and $(\Ib-\Pib )$ projects onto the latter kernel, we have $(\Ib-\Pib )\ab \neq \zerob$ for any nonzero $\ab$ in the former kernel, i.e., the Hessian and the constraint matrix only share the trivial nullspace by construction, making the KKT matrix nonsingular and thus $\ab^\ast(\xib, \ubs_f, \ybs_f)$ unique \cite[Sect.~10.1.1]{boyd2004}. The additional cost introduced by the regularizer is therefore given as 
\begin{align*} 
     h^\ast(\xib, \ubs_f, \ybs_f) 
    = \lambda_a \begin{pmatrix}
    \xib \\ \ubs_f \\ \ybs_f 
\end{pmatrix}^\top\!\!\!\!
\begin{pmatrix}
    \Wb_p \\ \Ub_f \\ \Yb_f 
\end{pmatrix}^{+,\top}
\!\!\!\!\!\!\!(\Ib-\Pib)
\begin{pmatrix}
    \Wb_p \\ \Ub_f \\ \Yb_f 
\end{pmatrix}^+\!\!\!\!
\begin{pmatrix}
    \xib \\ \ubs_f \\ \ybs_f 
    \end{pmatrix},
    \end{align*}
which can be shown to be equivalent to
\begin{align}\label{eq:2-normCostProj}
    h^\ast(\xib, \ubs_f, \ybs_f) 
    = \lambda_a (\ybs_f-\hat\ybs_\text{SPC})^\top\Qbc_\text{reg}(\ybs_f-\hat\ybs_\text{SPC}) 
\end{align}
by, again, utilizing block LDU decompositions of the inverses involved in the weighing matrix. The resulting cost is similar to \eqref{eq:2-normCost}, which should not be too surprising, since the projection-based regularization was introduced to remove the bias related to additional costs in $(\xib, \ubs_f)$ and only penalize deviations from $\hat\ybs_\text{SPC}(\xib, \ubs_f)$, which is exactly captured by \eqref{eq:2-normCostProj}. Furthermore, this expression allows us to state the following result regarding the implicit predictor, which is, again, implied by optimality.
\begin{thm}\label{thm:implicitPredictor2NormProj}
    Consider the regularized DPC problem \eqref{eq:regularizedDPC} with regularizer $h(\ab)=\lambda_a\|(\Ib-\Pib)\ab\|_2^2$ and without additional input and output constraints. Under Assumption~\ref{assum:fullRank}, \eqref{eq:DPC_predictor2} is an implicit predictor for this problem.
\end{thm}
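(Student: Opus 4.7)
The plan is to essentially recycle the proof of Theorem~\ref{thm:implicitPredictor2Norm} and exploit the observation that the $\ybs_f$-dependent part of the total objective is identical in both regularization cases. The authors have already done the most laborious step by deriving the simplified regularization cost \eqref{eq:2-normCostProj}, so what remains is to line up the inner minimization over $\ybs_f$ with the one analyzed before and to invoke the same optimality-implies-implicit-predictor argument at the end.

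Concretely, I would first note that the statement of the theorem concerns the same implicit predictor \eqref{eq:DPC_predictor2} as in Theorem~\ref{thm:implicitPredictor2Norm}, so it suffices to recover this predictor as the parametric minimizer of
$$\argmin_{\ybs_f}\;\|\ybs_f\|_\Qbc^2+\|\ubs_f\|_\Rbc^2+h^\ast(\xib,\ubs_f,\ybs_f)$$
with $(\xib,\ubs_f)$ treated as parameters and $h^\ast$ now given by \eqref{eq:2-normCostProj}. Dropping all terms independent of $\ybs_f$ — which for the projection-based regularizer amounts to dropping only $\|\ubs_f\|_\Rbc^2$ — yields exactly the reduced problem
$$\argmin_{\ybs_f}\;\ybs_f^\top(\lambda_a\Qbc_\text{reg}+\Qbc)\ybs_f-2\hat\ybs_\text{SPC}^\top\lambda_a\Qbc_\text{reg}\ybs_f$$
encountered in the proof of Theorem~\ref{thm:implicitPredictor2Norm}. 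Its unique minimizer is therefore again \eqref{eq:DPC_predictor2}. The concluding step is verbatim identical: since $\hat\ybs_\text{DPC}(\xib,\ubs_f)$ is the parametric minimizer over $\ybs_f$ for every $(\xib,\ubs_f)$, any optimizer $(\ubs_f^\ast,\ybs_f^\ast)$ of the full OCP must satisfy $\ybs_f^\ast=\hat\ybs_\text{DPC}(\xib,\ubs_f^\ast)$, so adding this relation as an equality constraint changes neither the minimizers nor the optimal value, qualifying \eqref{eq:DPC_predictor2} as an implicit predictor in the sense of Definition~\ref{def:implicit_predictor}.

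The key conceptual observation worth stating explicitly is that the passage from $h(\ab)=\lambda_a\|\ab\|_2^2$ to $h(\ab)=\lambda_a\|(\Ib-\Pib)\ab\|_2^2$ removes precisely the second, $(\xib,\ubs_f)$-quadratic bias term in \eqref{eq:2-normCost}, which is independent of $\ybs_f$ and therefore invisible to the inner minimization defining the implicit predictor. This is exactly why, despite the two regularizers producing different total costs and different interpretations in terms of bias, they induce the same implicit predictor — matching what Fig.~\ref{fig:unconstrainedDPC} illustrates.

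I do not expect any real obstacle. The only minor items to flag are: (i) well-posedness of the inner $\ybs_f$-minimization, which follows because $\lambda_a\Qbc_\text{reg}+\Qbc$ inherits positive definiteness from $\Qbc\succ 0$; and (ii) the fact that the cost $h^\ast$ used here is genuinely the value function of the inner problem \eqref{eq:regDPCinner}, which requires only the uniqueness of $\ab^\ast(\xib,\ubs_f,\ybs_f)$ already established in the discussion preceding the theorem via the nonsingular KKT matrix argument.
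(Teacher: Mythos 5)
Your proposal is correct and follows essentially the same route as the paper: the paper's proof likewise reduces to Theorem~\ref{thm:implicitPredictor2Norm} by observing that the only difference between \eqref{eq:2-normCostProj} and \eqref{eq:2-normCost} is the $(\xib,\ubs_f)$-dependent bias term, which is independent of $\ybs_f$ and hence irrelevant to the inner minimization defining the implicit predictor. Your additional remarks on positive definiteness of $\lambda_a\Qbc_\text{reg}+\Qbc$ and on uniqueness of $\ab^\ast$ are consistent with the paper's preceding discussion and merely make explicit what the paper leaves implicit.
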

\begin{proof}
    The proof can be carried out analogously to Theorem~\ref{thm:implicitPredictor2Norm} by noting that the only difference between the regularizer costs \eqref{eq:2-normCostProj} and \eqref{eq:2-normCost} is irrelevant to the implied predictor since it does not depend on $\ybs_f$.
\end{proof}

This result may be surprising to readers familiar with both DPC regularizations, since the minimizers $(\ubs_f^\ast, \ybs_f^\ast)$ can differ drastically depending on the chosen regularizer. However, it shows that the predictive aspect of both DPC schemes is the same in the sense that, while $(\ubs_f^\ast, \ybs_f^\ast)$ may differ (especially for increasing $\lambda_a$, as seen in Fig.~\ref{fig:unconstrainedDPC}.d), the relation $\ybs_f^\ast = \hat\ybs_\text{DPC}(\xib, \ubs_f^\ast)$ holds for both as visualized in Fig.~\ref{fig:unconstrainedDPC}.
\begin{rem}
    We can easily extend Theorems~\ref{thm:implicitPredictor2Norm} and \ref{thm:implicitPredictor2NormProj} towards reference tracking cost $\|\ybs_f-\ybs_\text{ref}\|_\Qbc^2$, which adds a linear term in the cost function and thus yields an affine implicit predictor, which is biased towards the unconstrained minimum $\ybs_\text{ref}$.
\end{rem}

\subsection{Predictions affected by (output) constraints}\label{subsec:ConstrainedDPC}

As previously noted, the implied predictors associated with regularized DPC tend to predict more optimistically (with respect to the control objective) than the SPC predictor. This phenomenon will be further highlighted in this section, where we investigate the effect of additional constraints.
In the following, we will restrict our attention to a polyhedral set $\Yc$, which is a common assumption in (linear) predictive control. Similar results can be expected for differently shaped sets, although they are not as nice to characterize accurately. Moreover, we introduce $\Xc \subseteq \dom(\xib)$, which should not be seen as a constraint on an optimization variable (because $\xib$ is a parameter of the OCP), but instead can be viewed as the set of realistically occurring initial conditions in practical operation.
\begin{thm}\label{thm:implicitPredictor2NormConstr}
    Consider the regularized DPC problem \eqref{eq:regularizedDPC} with regularizer $h(\ab)=\lambda_a\|\ab\|_2^2$ or $h(\ab)=\lambda_a\|(\Ib-\Pib)\ab\|_2^2$ and with constraint sets $\Uc, \Yc$, where the latter is a polyhedron. Under Assumption~\ref{assum:fullRank},
    an implicit predictor $\hat\ybs_\text{DPC}(\xib, \ubs_f)$ can be characterized as the minimizer to the multiparametric quadratic program~(mpQP)
    \begin{align}\label{eq:PWA_predictor}
    \argmin_{\ybs_f} & \; \ybs_f^\top\! \left(\lambda_a\Qbc_\text{reg}+\Qbc\right)\ybs_f 
      -2 \hat\ybs_\text{SPC}^\top(\xib, \ubs_f) \lambda_a\Qbc_\text{reg} \ybs_f \\
     \qquad \text{s.t.}&\;  \ybs_f \in \Yc \nonumber
    \end{align}
    with parameters $(\xib,\ubs_f)$, which is a continuous piecewise affine (PWA) function on partitions of the domain $\Xc\times \Uc$.
\end{thm}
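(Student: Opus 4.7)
The plan is to imitate the proof of Theorem~\ref{thm:implicitPredictor2Norm} exactly as far as possible and then invoke standard multiparametric quadratic programming (mpQP) theory to conclude the PWA property. Treating $(\xib,\ubs_f)$ as parameters, I would again write the inner minimization in $\ybs_f$. For the two regularizers under consideration, the $\ybs_f$-dependent part of $h^\ast(\xib,\ubs_f,\ybs_f)$ is identical: equations~\eqref{eq:2-normCost} and~\eqref{eq:2-normCostProj} agree up to a term depending only on $(\xib,\ubs_f)$, so in both cases it reduces to $\lambda_a(\ybs_f-\hat\ybs_\text{SPC})^\top\Qbc_\text{reg}(\ybs_f-\hat\ybs_\text{SPC})$. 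Adding $\|\ybs_f\|_\Qbc^2$ and dropping additive constants in $\ybs_f$ yields precisely the objective of~\eqref{eq:PWA_predictor}; the output constraint~\eqref{eq:DPCsetConstr} then contributes $\ybs_f\in\Yc$ while $\ubs_f\in\Uc$ is irrelevant to this inner problem (it only restricts the admissible parameters).

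The implicit predictor property would follow by the same optimality argument as in Theorem~\ref{thm:implicitPredictor2Norm}: at any minimizer $(\ubs_f^\ast,\ybs_f^\ast)$ of \eqref{eq:regularizedDPC}, the component $\ybs_f^\ast$ must solve~\eqref{eq:PWA_predictor} with parameters $(\xib,\ubs_f^\ast)$, because otherwise one could strictly decrease the total cost while keeping $\ubs_f^\ast$ fixed and remaining feasible (both $\|\ubs_f\|_\Rbc^2$ and, in the projection case, the $\ybs_f$-independent portion of $h^\ast$ are unchanged). Hence $\ybs_f^\ast=\hat\ybs_\text{DPC}(\xib,\ubs_f^\ast)$, and adjoining the equality $\ybs_f=\hat\ybs_\text{DPC}(\xib,\ubs_f)$ preserves both the minimizers and the optimal value, meeting Definition~\ref{def:implicit_predictor}.

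For the PWA characterization, I would check the hypotheses of the classical mpQP result (Bemporad, Morari, Dua and Pistikopoulos): the Hessian $2(\lambda_a\Qbc_\text{reg}+\Qbc)$ is positive definite since $\Qbc\succ 0$ and $\Qbc_\text{reg}\succeq 0$, so the program is strictly convex and the minimizer is unique; the linear term $-2\hat\ybs_\text{SPC}^\top\lambda_a\Qbc_\text{reg}\ybs_f$ depends affinely on $(\xib,\ubs_f)$ through the linear SPC map $\hat\ybs_\text{SPC}(\xib,\ubs_f)=\Kb_\text{SPC}(\xib^\top,\ubs_f^\top)^\top$; and the feasible set $\Yc$ is a polyhedron that is independent of the parameters. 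Under these conditions, the parametric minimizer is a continuous PWA function defined on a polyhedral partition of the parameter domain, and restricting to $\Xc\times\Uc$ gives the claim.

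The main obstacle I anticipate is not the mpQP invocation itself (which is textbook once the hypotheses are verified) but rather the bookkeeping that unifies both regularizer choices in a single argument: one must point out carefully that the extra quadratic term in $(\xib,\ubs_f)$ that appears in \eqref{eq:2-normCost} but not in \eqref{eq:2-normCostProj} is inert in the inner problem, so that although the two DPC schemes produce very different minimizers $(\ubs_f^\ast,\ybs_f^\ast)$ (as already illustrated in Fig.~\ref{fig:unconstrainedDPC}), they share the same implicit predictor even after polyhedral output constraints are imposed. A secondary subtlety worth a remark is that feasibility of \eqref{eq:PWA_predictor} holds for all parameters whenever $\Yc\neq\emptyset$, so no additional feasibility region in $(\xib,\ubs_f)$ needs to be carved out of $\Xc\times\Uc$.
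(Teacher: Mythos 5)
Your proposal is correct and follows essentially the same route as the paper: reduce to the inner minimization over $\ybs_f$ with $(\xib,\ubs_f)$ as parameters (noting that \eqref{eq:2-normCost} and \eqref{eq:2-normCostProj} differ only by a $\ybs_f$-independent term and that $\Uc$ merely restricts the parameter domain), and then invoke standard mpQP results for the continuous PWA structure, exactly as in the paper's appeal to the arguments of Theorems~\ref{thm:implicitPredictor2Norm} and \ref{thm:implicitPredictor2NormProj} and to \cite[Cor.~5.2]{borrelli2017}. Your write-up simply makes explicit the optimality argument and the verification of the mpQP hypotheses that the paper leaves implicit.
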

\begin{proof}
    Derivation of the mpQP can be carried out analogously to Theorems~\ref{thm:implicitPredictor2Norm} and \ref{thm:implicitPredictor2NormProj}, by noting that only the output constraints are explicitly relevant to the implied predictor and other constraints only specify the domain of its parametric solution. The structure of the minimizer follows from standard mpQP results (see, e.g., \cite[Cor.~5.2]{borrelli2017}).
\end{proof}
\begin{figure}
    \centering
    \includegraphics[trim=5.95cm 13.7cm 6.6cm 11.2cm,clip=true, scale=0.9]{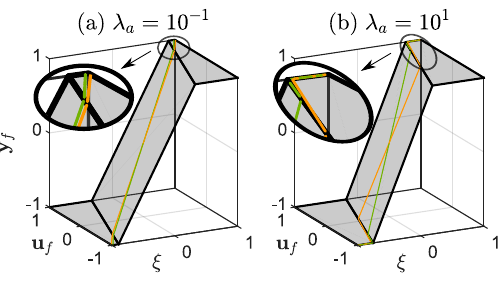}
        \caption{The implicit predictor (grey) for the analyzed DPC schemes with (output) constraints is a PWA function. For increasing $\lambda_a$, the parametric DPC solutions (orange/green, see Fig.~\ref{fig:unconstrainedDPC} for a legend) tend to stay on the segment that matches the unconstrained predictor (if possible).}
        \label{fig:constrainedDPC}
\end{figure}

While PWA functions are not uncommon in (linear) MPC (mainly in relation to its explicit solution \cite{borrelli2017}), we want to stress that the PWA function characterized in Theorem~\ref{thm:implicitPredictor2NormConstr} is in its interpretation very different to an optimal control law $\ubs_f^\ast(\xib)$, since it characterizes the predictive behavior we associate with regularized DPC. While this implicit predictor coincides with \eqref{eq:DPC_predictor2} on the set
    $
        {\Zc := \bigl\{(\xib,\ubs_f)\in\Xc\times \Uc\,\big| \eqref{eq:DPC_predictor2} \; \text{satisfies} \; \hat\ybs_\text{DPC}(\xib,\ubs_f)\in \Yc \bigr\}},
    $
i.e., where the unconstrained solution naturally satisfies the output constraints, it changes drastically wherever this coincidence does not occur, as seen in Fig.~\ref{fig:constrainedDPC}. However, at this point, a discussion on the uniqueness of implicit predictors is due. While the implicit predictors derived so far are kept as general as possible (i.e., they are independent of input cost weights $\Rbc$ and constraints $\Uc$), by Definition~\ref{def:implicit_predictor} agreement with the predictor needs to occur only at optimal points $(\ubs_f^\ast, \ybs_f^\ast)$ for it to be valid. In other words, if the actual solution to the constrained regularized DPC problem \eqref{eq:regularizedDPC} naturally stays on $\Zc$, i.e., we have $(\xib, \ubs_f^\ast)\in \Zc$ for all $\xib\in\Xc$, then its predictive behavior is equally well explained by both the linear predictor \eqref{eq:DPC_predictor2} and the PWA solution to \eqref{eq:PWA_predictor}. Moreover, the DPC solution favors this phenomenon for increasing $\lambda_a$, as highlighted in Fig.~\ref{fig:constrainedDPC}, since deviations of the unconstrained predictor, in this case, are increasingly penalized by the regularizer. However, it should be obvious that (even with very large $\lambda_a$) a match of both predictors cannot occur for such $\xib$, where \eqref{eq:DPC_predictor2} cannot be satisfied for any $(\ubs_f,\ybs_f)\in \Uc \times \Yc$. In that case, the DPC solution will definitely change its predictive behavior in favor of satisfying the output constraints in accordance with \eqref{eq:PWA_predictor}. In other words, regularized DPC as in \eqref{eq:regularizedDPC} will simply provide a ``very unlikely'' solution if all ``more likely'' solutions (in the sense of proximity to its unconstrained predictions or even the SPC predictor) are infeasible. Consequentially, from the viewpoint of an explicit solution to the OCP \cite{borrelli2017}, where $\Xc$ acts as a constraint of the parameter $\xib$, its set of feasible states is equal to $\Xc$ regardless of other constraints $\Uc, \Yc$.

Finally, we want to highlight that the implicit predictor in Theorem~\ref{thm:implicitPredictor2NormConstr} is independent of the input constraints, since $\Uc$ only affects its domain but, in contrast to $\Yc$, not its structure. 
Therefore, the linear implicit predictor \eqref{eq:DPC_predictor2} remains valid in the absence of output constraints (i.e., $\Yc = \R^{p N_f}$), even if additional inputs constraints $\Uc \subset \R^{m N_f}$ are considered.

\subsection{A remark on the numerical example}

To visualize the implicit predictors, we deliberately chose a very low dimensional example with $n = p = m = N_f = \Qbc=\Rbc = 1, \ell = 3$ and treated it in a state-space setting (see Rem.~\ref{rem:Statespace}).
Furthermore, the constraint sets involved in Fig.~\ref{fig:constrainedDPC} are given by $\Uc=\Yc=[-1,1]$. The data generating system is LTI with parameters $(\Ab, \Bb, \Cb, \Db) = (2, -0.5, 1, 0)$ and zero-mean Gaussian measurement noise with variance $\sigma^2 = 0.01$. However, we would like to emphasize that the structure of the implicit predictor does not depend on the data generating system class but only the data itself and the OCP parameters.

\section{Conclusions and Outlook}
\label{sec:Conclusions}

By introducing the notion of implicit predictors, we related the input-(state)-output prediction behavior of regularized DPC schemes to more traditional predictive control schemes such as MPC and SPC, where the predictor is explicitly enforced as an equality constraint. The structure of these implicit predictors seems very relevant to us since they can be interpreted as the behavior that the data-driven predictions attribute to the system based on given data and independently of its actual (unknown) behavior. To demonstrate this concept, we derived implicit predictors for a basic regularized DPC scheme and analyzed their structure for two popular choices of (squared) 2-norm regularization and its dependence on constraints.

In future work, we will continue this structural analysis with general ranks beyond Assumption~\ref{assum:fullRank} and further common DPC modifications such as slack variables \cite{Coulson2019DeePC, Berberich2020stability}, (terminal) equality constraints \cite{Berberich2020stability}, and different regularization choices such as the 1-norm \cite{Coulson2019DeePC,Dorfler2021} or general $p$-norms. Although currently limited to providing a novel perspective on current DPC practices, we believe that these analyses have the potential to establish a foundation for new schemes that promote advantageous properties of the implicit predictor, and for new theoretical insights, including DPC stability proofs such as in \cite{Berberich2020stability}, but based on the implicit predictor.


\end{document}